\begin{document}
\title{Data retrieval time for energy harvesting wireless sensor networks}

\author{
Mihaela Mitici \and Jasper Goseling \and Maurits de Graaf   \and  Richard J.\ Boucherie
}
\institute{M.~Mitici \and J.~Goseling  \and M.~de Graaf \and  R.J.~Boucherie \at
Stochastic Operations Research Group, Department of Applied Mathematics,\\
University of Twente,
P.O.~Box 217, 7500 AE, Enschede, The Netherlands. \\
\email{m.a.mitici@utwente.nl}}

\maketitle

%
%
%
\begin{abstract}
We consider the problem of retrieving a reliable estimate of an attribute monitored by a wireless sensor network, where the sensors harvest energy from the environment  independently, at random. Each sensor stores the harvested energy in batteries of limited capacity. Moreover, provided they have sufficient energy, the sensors broadcast their measurements in a decentralized fashion. Clients arrive at the sensor network according to a Poisson process and are interested in retrieving a fixed number of sensor measurements, based on which a reliable estimate is computed. 
We show that the time until an arbitrary sensor broadcasts has a phase-type distribution.
Based on this result and the theory of order statistics of phase-type distributions, we determine the probability distribution of the time needed for a client to retrieve a reliable estimate of an attribute monitored by the sensor network. We also provide closed-form expression for the retrieval time of a reliable estimate when the capacity of the sensor battery or the rate at which energy is harvested is asymptotically large. 
In addition, we analyze numerically the retrieval time of a reliable estimate for various sizes of the sensor network, maximum capacity of the sensor batteries and rate at which energy is harvested.  These results show that the energy harvesting rate and the broadcasting rate are the main parameters that influence the retrieval time of a reliable estimate, while deploying sensors with large batteries does not significantly reduce the retrieval time.

\keywords{Wireless sensor networks \and energy harvesting \and data retrieval time \and phase-type distribution \and order statistics
}
\end{abstract}

%
%
%

\section{Introduction} \label{sec:intro}

Managing energy consumption is an important component of wireless sensor network design, as it can lead to increased throughput and network lifetime. 
Recent technological advances have enabled sensor to harvest energy from the environment and to use this energy to recharge their batteries. Several technologies have been shown to be possible for energy harvesting such as solar energy, radio frequency, vibrations, thermoelectric (see, for instance, \cite{paradiso2005energy} for various examples of energy harvesting technologies).

In the case of battery-limited sensor networks, one of the main design goals is to minimize the sensor energy consumption so that the lifetime of the sensor network is extended. In contrast, energy harvesting techniques enable sensors to recharge their batteries over time, having the potential to extend the lifetime of the sensor network and to improve the overall performance of the network.
However, it is often the case that the harvested energy availability varies with time in a non-deterministic manner.  For instance, solar energy varies throughout the day and the intensity of the direct sunlight cannot be controlled. 
To enhance energy availability, sensor are often equipped with batteries where energy is stored for later use. The storage capacity of the batteries, however, is often limited. It is important, thus, to develop mechanisms to match the energy generation profile of the harvesting sensors with the  energy consumption
of the sensors. In particular, since sensors consume most of their energy to transmit their measurements \cite{akyildiz2002wireless}, it is important to jointly consider sensor transmissions and the process of energy harvesting.
Energy harvesting for wireless communications has resulted in a significant number of research publications in the last decade \cite{sudevalayam2011energy}. 
Considerable attention has been given to the coordination between sensor transmission scheduling, which is an energy-dependent process, and energy harvesting, which depends on the availability of the energy source. 
Various metrics for data transmission in energy  harvesting communication systems have been considered. 
In \cite{antepli2011optimal,yang2012broadcasting,yang2012optimal} the minimization of the time to transmit a fixed number of bits using a AWGN broadcast channel, where the transmitter harvests energy and has a finite-capacity rechargeable battery, is considered. In contrast to the model proposed in this paper, the energy arrival process is assumed to be known in advance, in an off-line manner. The authors optimize the transmission rate or the transmission power based on the energy available at the transmitter and on the amount of data that needs to be transmitted such that the transmission time is minimized. Structural properties of the transmission policies are derived.
Similar to the model proposed in this paper, in \cite{sharma2010optimal, tandon2014has} the process of energy harvesting is stochastic. Transmission policies that maximize the rate of data transmission to minimize the mean delay of data transmission are derived. In \cite{tandon2014has}, the average delay of data packets arriving according to a Poisson process at a transmitter which harvests energy, is derived. In comparison, in this paper we compute the expected delay of a set of sensor measurements to be transmitted by distinct sensors, each harvesting energy according to a Poisson process.

The problem of maximizing the amount of data transmitted up to a certain point is considered in \cite{devillers2012general,ozel2011transmission,ho2012optimal,tutuncuoglu2012optimum}. In \cite{devillers2012general} a general framework is provided to maximize the amount of transmitted data by a given deadline when the arrival process of energy is known in an off-line manner, at the transmitter and the battery of the transmitter is limited or suffers from energy leakage.  In  \cite{ozel2011transmission,ho2012optimal,tutuncuoglu2012optimum} dynamic programming is employed to determine an optimal energy allocation policy over a finite horizon so that the number of transmitted bits is maximized.
In \cite{lei2009generic} optimal transmission policies are derived to specify whether to transmit incoming data packets or to drop them based on a value attached to each packet and on the energy available at the transmitter.

Optimal transmission policies with energy harvesting nodes that transmit using fading wireless channels are considered in \cite{ozel2011transmission,ho2012optimal, kuan2014reliable}. In \cite{kuan2014reliable}, the probability of successful reception of data packets and the energy cost per transmitted packet are determined for energy harvesting devices that broadcast using non-perfect transmission channels. The authors propose an erasure-based broadcast scheme to guarantee reliable transmissions.

This paper considers the problem of retrieving a fixed number of sensor measurements over an  attribute  from distinct wireless sensors that harvest energy from the environment. 
Energy is harvested by each sensor according to a Poisson process, independently of the other sensors.
The fact that energy is harvested at random points in time reflects the stochastic nature of the availability of harvested energy.
We further assume that the batteries have limited storage capacity. When the battery of a sensor reaches the maximal storage capacity, additional harvested energy is discarded.
Provided that they have energy, the sensors broadcast measurements at an exponential rate and independently of each other.
Clients arrive at the network according to some Poisson process and are interested in retrieving several measurements over an attribute monitored by the sensors. Based on the retrieved sensor measurements, each client computes an estimate of the attribute. We impose that measurements are retrieved from distinct sensors to avoid the situation where the same measurement, from the same sensor, is retrieved multiple times, which could lead to a biased estimate.
 We determine the probability distribution of the retrieval time of a reliable estimate of an attribute monitored by the sensor network. Consequently, we provide a closed-form expression for the expected retrieval time of a reliable estimate. We also analyze the retrieval time of a reliable estimate when the capacity of the battery or the rate at which energy is harvested is asymptotically large. These results show the impact of the energy availability at the sensors, as well as the energy storage capabilities of the sensors, on the time required to retrieve a reliable estimate of an attribute from the sensor network.

The remainder of this paper is organized as follows. In Section \ref{ModelH} we formulate the model and the problem statement. In Section \ref{Analysis} we determine the distribution of the time for a client to retrieve a reliable estimate of an attribute from the sensor network. We also determine the retrieval time of a reliable estimate when the rate at which energy is harvested and the maximum capacity of the sensor batteries is asymptotically large. In Section \ref{Numerical} we compute numerically the retrieval time of a reliable estimate under various assumptions regarding the size of the sensor network, the energy harvesting rate and the maximum capacity of the sensor batteries.
In Section \ref{Conclusions} we discuss the results and provide conclusions.
%
%
%
\section{Model and Problem Statement} \label{ModelH}

We consider a network of $N$ wireless sensors, each having a noisy measurement on an attribute $\theta$. The measurements are subject to independent and identically distributed additive Gaussian noise with variance $\sigma^2$, i.e.\ $X_i \sim \mathcal{N}(\theta, \sigma^2)$.

Clients arrive at the sensor network according to a Poisson process with rate $\lambda_a$. Each client is  interested in acquiring from the network a sufficiently large set of sensor measurements based on which they can compute a reliable estimate $\overline{ X}$ of $\theta$. 

We consider an estimate to be reliable if the variance of the estimate $\overline{ X}$ at each client is below a targeted threshold $H$. Since 
\[ \textrm{Var} (\overline{X}) = \textrm{Var} (\frac{1}{s} \sum_{i=1}^{s} X_i)= \frac{1}{s^2} \sum_{i=1}^s \textrm{Var} (X_i)= \frac{\sigma^2}{s}<H,
\] 
we consider $\displaystyle s = \lceil \frac{\sigma^2}{H}\rceil$. Thus, any set of $s$ measurements from distinct sensors is sufficient for the client to retrieve a reliable estimate of $\theta$. As there are $N$ sensors which can provide $N$ distinct measurements, we also assume that $s \leq N$. 

Sensor $i$, $1 \leq i \leq N$, has $b_i$ units of energy, with $ 0 \leq b_i \leq B$. Each sensor harvests one unit of energy at an exponential rate $\lambda_e$, independently of the other sensors. The energy  is harvested from the environment. If $b_i=B$ and new energy is harvested, then this harvested energy is discarded.

We further assume that the sensor network broadcasts a measurement at an exponential rate $\mu$. This rate is shared uniformly by the $N$ sensors in the network. Thus, at an exponential rate $\mu/N$, independently of the other sensors, a random sensor broadcasts its measurement, provided it has energy. 

Upon a broadcast, the energy of the broadcasting sensor decreases by 1 unit (see also Figure \ref{fig:harvesting}). 

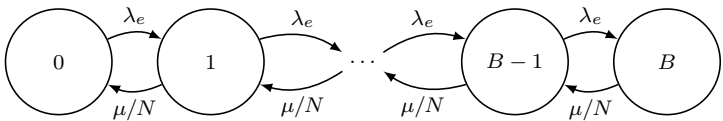
\begin{figure}[ht!]
\centering
\begin {tikzpicture}[-latex ,auto ,node distance =1 cm and 2cm ,on grid ,
semithick , state/.style ={ circle ,top color =white , draw , minimum width =1.4 cm}]
\node[state] (1)
{$1$};
\node[state] (0) [ left=of 1] {$0$};
\node (2) [ right =of 1] {$\ldots$};
\node[state] (3) [right =of 2] {$B-1$};
\node[state] (4) [right =of 3] {$B$};
\path (1) edge [bend left =25] node[below] {$\mu/N$} (0);
\path (0) edge [bend right = -25] node[above] {$\lambda_e$} (1);
\path (1) edge [bend left =25] node[above] {$\lambda_e$} (2);
\path (2) edge [bend right = -25] node[below] {$\mu/N $} (1);
\path (3) edge [bend left =25] node[above] {$\lambda_e$} (4);
\path (4) edge [bend right = -25] node[below] {$\mu/N $} (3);
\path (2) edge [bend left =25] node[above] {$\lambda_e$} (3);
\path (3) edge [bend right = -25] node[below] {$\mu/N $} (2);
\end{tikzpicture}
\caption{Birth-and-Death model for a single sensor that broadcasts, provided it has energy, at an exponential rate $\mu/N$ and harvests energy from the environment at an exponential rate $\lambda_e$. \label{fig:harvesting}}
\end{figure}

A measurement is retrieved by a client only if it is the first time this sensor broadcasts its measurement to this client. We call such a measurement to be  \textit{innovative} for this client. Thus, a client does not retrieve a measurement from the same sensor  multiple times.
Moreover, we assume that clients do not drop innovative measurements.

We are interested in the time $W_s$ for a client to retrieve a reliable estimate from the sensor network.

We end this section with some notation that will be useful for working with phase-type distributions. Let $\textbf{e}_k$ be a $k \times 1$ matrix with all unit entries.
Let $I_k$ denote the $k \times k$ identity matrix. For $n\times n$ matrix $M_1$ and $m \times m$ matrix $M_2$, let $M_1 \otimes M_2$ denote the Kronecker product of matrices $M_1$ and $M_2$ and let $M_1 \oplus M_2$ denote the Kronecker sum, i.e.
\[
M_1 \oplus M_2= M_1 \otimes I_m + I_n \otimes M_2.
\] 
Finally, let $M^{\otimes n}$ and $M^{\oplus n}$ denote the $n$-fold Kronecker product and the $n$-fold Kronecker sum with itself, respectively.

%
%
%
\section{Analysis}
\label{Analysis}

In this section we first determine the distribution of the time for a single sensor to broadcast, given that the system is in steady-state.  We show that this is a phase-type distribution. Using these results, we next determine the distribution of the time for a random client to retrieve
$s$ measurements from distinct sensors. Lastly, we compute the retrieval time of a reliable estimate $\overline{X}$ of $\theta$ for asymptotically large $B$, the maximum capacity of the sensor batteries, $\lambda_e$, the rate at which sensors harvest energy from the environment, and $N$, the size of the sensor network.

\subsection{A Single Sensor}

First, we consider  the steady-state probability that a random sensor has $i$ units of energy,  $0 \leq i \leq B$, denoted by $\nu(i)$,

The evolution of the units of energy at a sensor follows a Birth-and-Death model and a finite state space $\{0, 1, \dots, B\}$  with births at rate $\lambda_e$ and deaths at rate $\mu/N$ ( see Figure~\ref{fig:harvesting}). The steady-state distribution of such a model is well known in literature (see, for instance, \cite{asmussen2008applied}) and is, therefore, stated without proof below.
\begin{lemma}\label{pisensor}
The steady-state probability for an arbitrary sensor to have $i$ units of energy, $0 \leq i \leq B$, is:
\begin{equation}\label{pi0}
\nu(i)=
  \nu_0\left(\frac{\lambda_e N}{\mu} \right)^i, 
\end{equation}
where $\nu_0=(\lambda_eN/\mu-1)/((\lambda_e N/\mu)^{B+1} -1)$, if $\lambda_e \neq \mu/N$ and $\nu_0=1/(B+1)$ otherwise.
\end{lemma}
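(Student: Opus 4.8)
The plan is to exploit the fact that any birth-and-death process on a finite, one-dimensional state space is reversible, so that the stationary distribution $\nu$ can be obtained from the local (detailed) balance equations rather than from the full global balance system. First I would write down the detailed balance equation linking adjacent states $i$ and $i+1$: in steady state the probability flux from $i$ to $i+1$ must equal the flux in the reverse direction, i.e.
\[
\nu(i)\,\lambda_e = \nu(i+1)\,\frac{\mu}{N}, \qquad 0 \le i \le B-1.
\]
Writing $\rho = \lambda_e N/\mu$, this rearranges to the one-step recursion $\nu(i+1) = \rho\,\nu(i)$, whose solution is immediately $\nu(i) = \nu_0\,\rho^i$ for $0 \le i \le B$, with $\nu_0 = \nu(0)$ still to be determined. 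This already reproduces the geometric form \eqref{pi0}, so the only remaining work is normalization.

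To fix $\nu_0$ I would impose $\sum_{i=0}^{B} \nu(i) = 1$, which gives $\nu_0 \sum_{i=0}^{B} \rho^i = 1$. Here the case split in the statement arises naturally from evaluating the finite geometric sum. When $\lambda_e \neq \mu/N$, equivalently $\rho \neq 1$, the partial sum is $\sum_{i=0}^{B}\rho^i = (\rho^{B+1}-1)/(\rho-1)$, so that $\nu_0 = (\rho-1)/(\rho^{B+1}-1)$, which matches the claimed expression once $\rho = \lambda_e N/\mu$ is substituted back. When $\lambda_e = \mu/N$, i.e.\ $\rho = 1$, the sum degenerates to $B+1$ and hence $\nu_0 = 1/(B+1)$.

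There is no genuine obstacle here, since this is a textbook birth-and-death computation; the one point deserving care is to justify that a distribution satisfying detailed balance is automatically stationary, so that solving the local equations suffices in place of the global balance system. I would note that summing the detailed balance identities across the cut separating $\{0,\dots,i\}$ from $\{i+1,\dots,B\}$ recovers the global balance equations, and that the geometric solution is the unique normalizable one because the chain is irreducible on a finite state space. This confirms that $\nu(i) = \nu_0\,\rho^i$, with the two cases for $\nu_0$ above, is the stationary distribution, completing the proof.
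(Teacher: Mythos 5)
Your proof is correct. Note that the paper itself offers no proof of this lemma: it states the result as a well-known fact about finite birth-and-death chains, citing the literature. Your detailed-balance argument---local balance $\nu(i)\,\lambda_e = \nu(i+1)\,\mu/N$, the geometric recursion $\nu(i)=\nu_0\rho^i$ with $\rho=\lambda_e N/\mu$, and normalization via the finite geometric sum with the case split at $\rho=1$---is precisely the standard textbook derivation the paper is implicitly invoking, and your closing remark justifying that detailed balance plus irreducibility on a finite state space yields the unique stationary distribution correctly fills in the one point that deserves care.
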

Note that in the above theorem $\nu_0$ is the probability that the battery of a sensor is depleted.

Next, we consider $W$, the time until a sensor broadcasts, given that the system is in steady-state. Based on $W$, we compute the time for a client to retrieve a reliable estimate by assuming that, upon arrival, this client observes the energy level of the sensors in steady-state. This is valid since the sensors operate independently of the arrivals of the clients. Moreover, since clients arrive according to a Poisson process, they \textit{see} the system in steady-state (PASTA).

Since the evolution of the energy at an arbitrary sensor follows a continuous-time Markov process, the distribution of $W$ can be  modeled as a phase-type distribution as follows.
Consider a continuous-time Markov chain with $B+2$ states. State $0 \leq i \leq B$ are transient states and correspond to a sensor having $i$ units of energy. The $(B+2)-$th state is an absorbing state. This state is reached when the sensor broadcasts a measurement. At an exponential rate $\lambda_e$, a jump occurs from state $i$ to state $i+1$, $0 \leq i <B$. This corresponds to the sensor harvesting an additional unit of energy.  At an exponential rate $\mu/N$, a transition occurs from state $1 \leq i \leq B$ to the absorbing state. This corresponds to a sensor broadcast. Let the initial distribution over the transient states be $\nu$. Then, the time until absorption is $W$, as desired.

Before giving a more formal description of the phase-type representation of $W$, we make a simplification by observing that in the above description the states $1$ to $B$ can be aggregated into a single transient state, which we will denote by $1$. There is a transition from state $0$ to this aggregated state $1$ at rate $\lambda_e$ and there is a single outgoing transition from this aggregated state $1$ to the absorbing state at rate $\mu/N$. Below, we will give the formal representation of this phase-type distribution as $(a, T)$ and specify the vector $a$ and the matrix $T$. Given this representation as a phase-type distribution, we immediately obtain $\mathbb{P}(W \leq t)=1-ae^{Tx}\textbf{e}_2$. In this case, however, since $T$ has such a simple structure we can also obtain the distribution function in an explicit form. This yields the following result.
\begin{lemma}\label{lemmaPW}
The distribution of $W$ is phase-type $(\textbf{a},T)$, where
\begin{equation}
\textbf{a} =
\begin{bmatrix}
\nu_0 \\
1 - \nu_0
\end{bmatrix},
\qquad
T =
\begin{bmatrix}
-\lambda & \lambda \\
0 & -\mu/N
\end{bmatrix}.
\end{equation}
The distribution function of $W$ can be expressed as
\begin{equation}
\mathbb{P}(W \leq t) = 1- e^{-\frac{\mu}{N} t}  + \frac{\frac{\mu}{N}}{\frac{\mu}{N} - \lambda_e} \nu(0) \left( e^{-\frac{\mu}{N} t} - e^{-\lambda_e t}\right).
\end{equation}
\end{lemma}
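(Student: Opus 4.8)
The plan is to establish the phase-type pair $(\textbf{a},T)$ directly from the dynamics and then read off the distribution function from the standard identity $\mathbb{P}(W \leq t) = 1 - \textbf{a}\,e^{Tt}\,\textbf{e}_2$. For the initial vector I would invoke Lemma~\ref{pisensor} together with the PASTA observation already made in the text: an arriving client sees an arbitrary sensor in steady state, hence in the depleted state $0$ with probability $\nu(0)=\nu_0$ and in some state of $\{1,\dots,B\}$ with the complementary probability $\sum_{i=1}^{B}\nu(i)=1-\nu_0$. After aggregating the energized states this is exactly $\textbf{a}=(\nu_0,\,1-\nu_0)^{\top}$.

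For the generator I would build the absorbing chain that tracks the time until the \emph{first} broadcast. The key modelling point is that a broadcast sends the chain straight into absorption, so the downward (energy-decreasing) arcs of Figure~\ref{fig:harvesting} are replaced by transitions into the absorbing state at rate $\mu/N$ from every state $i\geq 1$. Consequently, once the sensor holds energy the process never returns to state $0$: harvesting only moves it upward inside $\{1,\dots,B\}$, and the only way to leave this set is to be absorbed, at the state-independent rate $\mu/N$. The sojourn in $\{1,\dots,B\}$ until absorption is therefore $\mathrm{Exp}(\mu/N)$, independent of the entry point and of the internal harvesting jumps, which is precisely the lumpability condition allowing states $1,\dots,B$ to be merged into a single transient state $1$ with one outgoing rate $\mu/N$ to absorption and incoming rate $\lambda_e=\lambda$ from state $0$. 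This gives $T$ as stated, with exit vector $-T\textbf{e}_2=(0,\,\mu/N)^{\top}$.

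With $(\textbf{a},T)$ fixed I would compute $e^{Tt}$ explicitly. Since $T$ is upper triangular with distinct eigenvalues $-\lambda_e$ and $-\mu/N$ (the coincident case $\lambda_e=\mu/N$ being the confluent limit),
\[
e^{Tt}=
\begin{bmatrix}
e^{-\lambda_e t} & \lambda_e\,\dfrac{e^{-\lambda_e t}-e^{-\frac{\mu}{N}t}}{\frac{\mu}{N}-\lambda_e} \\
0 & e^{-\frac{\mu}{N}t}
\end{bmatrix}.
\]
Substituting into $\mathbb{P}(W\leq t)=1-\textbf{a}\,e^{Tt}\,\textbf{e}_2$ and collecting the coefficients of $e^{-\frac{\mu}{N}t}$ and $e^{-\lambda_e t}$, the terms free of $\nu_0$ reduce to $1-e^{-\frac{\mu}{N}t}$, while the $\nu_0$-terms combine, after factoring $e^{-\frac{\mu}{N}t}-e^{-\lambda_e t}$ and using $1+\lambda_e/(\tfrac{\mu}{N}-\lambda_e)=(\tfrac{\mu}{N})/(\tfrac{\mu}{N}-\lambda_e)$, into the stated closed form.

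The only non-routine step is the exact aggregation of states $1,\dots,B$; everything else is a direct matrix-exponential computation. I would make the aggregation rigorous through the memorylessness argument above: after the sensor first acquires energy the next broadcast is governed by a single rate-$\mu/N$ clock, so the residual time to absorption from each energized state is identically $\mathrm{Exp}(\mu/N)$ and the merged chain has the same absorption-time law. As an independent check I would recompute the law by conditioning on the state at arrival: with probability $\nu_0$ the time is a sum of independent $\mathrm{Exp}(\lambda_e)$ and $\mathrm{Exp}(\mu/N)$ variables (a hypoexponential), and with probability $1-\nu_0$ it is a single $\mathrm{Exp}(\mu/N)$; combining the two distribution functions yields the same expression, confirming the result.
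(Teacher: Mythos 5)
Your proposal is correct, and it essentially contains the paper's own proof inside it. The phase-type pair is obtained as in the paper (initial vector from the steady-state distribution seen by a Poisson arrival, generator by aggregating the energized states $1,\dots,B$), but where the paper presents the aggregation as a bare observation, you justify it properly: in the absorbing chain every broadcast goes straight to absorption, so the set $\{1,\dots,B\}$ can only be left by absorption at the state-independent rate $\mu/N$, making the residual time to absorption from any energized state $\mathrm{Exp}(\mu/N)$ irrespective of entry point and internal harvesting jumps. For the distribution function, your primary route is the explicit computation of $e^{Tt}$ for the upper-triangular $T$ followed by substitution into $1-\textbf{a}\,e^{Tt}\,\textbf{e}_2$; the paper instead conditions on the initial state, writing $W$ as the hypoexponential sum $\mathrm{Exp}(\lambda_e)+\mathrm{Exp}(\mu/N)$ with probability $\nu(0)$ and as a single $\mathrm{Exp}(\mu/N)$ otherwise, and cites the known convolution formula --- which is precisely your closing ``independent check''. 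Both computations are routine and both yield the stated closed form; your $(1,2)$ entry of $e^{Tt}$, namely $\lambda_e\bigl(e^{-\lambda_e t}-e^{-\frac{\mu}{N}t}\bigr)/\bigl(\tfrac{\mu}{N}-\lambda_e\bigr)$, is correct, and collecting coefficients does reduce to the stated expression. The matrix-exponential route has the merit of showing that the closed form follows mechanically from the representation $(\textbf{a},T)$ itself, while the paper's conditioning argument avoids matrix computations entirely; your lumpability argument is a genuine strengthening of the paper's informal aggregation step.
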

\begin{proof}
The phase-type characterization follows from the discussion above the lemma.
 The expression of the distribution function is obtained by observing that,
given that we are in state $0$, which happens with probability $\nu(0)$, the distribution of $W$ is given by the sum of two exponentially distributed random variables with parameters $\frac{\mu}{N}$ and $\lambda_{e}$ (see, for instance, \cite{akkouchi2008convolution}). Given that we are in the aggregated state $1$, which happens with probability $1-\nu(0)$, the distribution of $W$ is given by an exponentially distributed random variable with parameter $\frac{\mu}{N}$. 
Therefore,
\begin{equation*}
\mathbb{P}(W \leq t)=\left( 1-\frac{\frac{\mu}{N}}{ \frac{\mu}{N}- \lambda_e} e^{-\lambda_e t} + \frac{\lambda_e}{\frac{\mu}{N} - \lambda_e} e^{-\frac{\mu}{N} t} \right) \nu(0) + (1-e^{-\frac{\mu}{N} t}) (1-\nu(0)).
\end{equation*}
The result  follows directly from the above expression.
\end{proof}

\subsection{Retrieving a reliable estimate}

In this section, we determine the distribution of the time for a random client to retrieve $s$ measurements from distinct sensors. 

\begin{lemma}\label{lemmaPWs}
The distribution of the time until a client receives $s$ measurements from distinct sensors is:
\[
\mathbb{P}(W_s \leq t) =
		 1- \sum_{j=0}^{s-1} \binom{N}{j} \left( \sum_{k=0}^j \binom{j}{k} (-1)^{j-k}  \textbf{a}^{\otimes(N-K)}\exp\left(t T^{\oplus(N-k)}\right) \textbf{e}_{2^{N-k}} \right).
\] 
\end{lemma}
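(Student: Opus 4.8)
The plan is to recognize $W_s$ as the $s$-th order statistic of $N$ independent copies of the single-sensor broadcast time $W$ of Lemma~\ref{lemmaPW}, and then to combine the classical order-statistic representation with the Kronecker-product machinery for phase-type distributions.

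First I would justify the reduction to order statistics. By the definition of an \emph{innovative} measurement, a client collects a measurement from a given sensor precisely at that sensor's first broadcast after the client's arrival; later broadcasts from the same sensor are not innovative. Hence the only per-sensor quantity that matters is the time until the first broadcast, which is exactly the random variable $W$. Since the sensors harvest energy and broadcast independently, and since an arriving client observes each sensor in steady state (PASTA, as argued before Lemma~\ref{lemmaPW}), the $N$ first-broadcast times $W^{(1)},\dots,W^{(N)}$ are i.i.d.\ copies of $W$, each with the phase-type representation $(\textbf{a},T)$. The time to obtain $s$ distinct measurements is therefore $W_s = W_{(s)}$, the $s$-th smallest among them.

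Second I would write the distribution of the $s$-th order statistic through the event that fewer than $s$ of the $W^{(i)}$ lie below $t$:
\[
\mathbb{P}(W_s \le t) = 1 - \mathbb{P}(W_s > t) = 1 - \sum_{j=0}^{s-1}\binom{N}{j}\, F(t)^j\, \bar F(t)^{N-j},
\]
where $F(t)=\mathbb{P}(W\le t)$ and $\bar F(t)=1-F(t)=\textbf{a}\,e^{Tt}\,\textbf{e}_2$ is the phase-type survival function. Next I would convert powers of $\bar F$ into the stated Kronecker form. Using the mixed-product property $(M_1\otimes M_2)(M_3\otimes M_4)=(M_1 M_3)\otimes(M_2 M_4)$ and the identity $\exp(t\,T^{\oplus m})=(e^{Tt})^{\otimes m}$ for the matrix exponential of a Kronecker sum, each power factors through a scalar:
\[
\bar F(t)^m = \big(\textbf{a}\,e^{Tt}\,\textbf{e}_2\big)^{\otimes m} = \textbf{a}^{\otimes m}\exp\!\big(t\,T^{\oplus m}\big)\,\textbf{e}_{2^m}.
\]
Equivalently, this is the survival function of the minimum of $m$ i.i.d.\ phase-type $(\textbf{a},T)$ variables, whose representation is $(\textbf{a}^{\otimes m},T^{\oplus m})$.

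Finally I would expand $F(t)^j=(1-\bar F(t))^j=\sum_{k=0}^{j}\binom{j}{k}(-1)^{k}\bar F(t)^k$ by the binomial theorem, multiply by $\bar F(t)^{N-j}$, and re-index $k\mapsto j-k$, so that the exponent of $\bar F$ becomes $N-k$ and the sign becomes $(-1)^{j-k}$. Substituting the Kronecker expression for $\bar F(t)^{N-k}$ then produces exactly the claimed double sum (with the evident correction $N-K \to N-k$ in the exponent of $\textbf{a}$). The probabilistic content here is routine once the reduction to i.i.d.\ first-broadcast times is in place; the main point requiring care is the Kronecker identity $\exp(t\,T^{\oplus m})=(e^{Tt})^{\otimes m}$ together with the accompanying index bookkeeping of the nested sum, and the prior verification that the single-sensor model of Figure~\ref{fig:harvesting} legitimately decouples the network into $N$ independent birth-and-death chains despite the shared broadcasting rate $\mu/N$.
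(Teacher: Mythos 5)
Your proposal is correct and follows essentially the same route as the paper: both reduce $W_s$ to the $s$-th order statistic of $N$ i.i.d.\ copies of the phase-type variable $W$, apply the binomial expansion to the tail sum $\sum_{j=0}^{s-1}\binom{N}{j}F(t)^j\bar F(t)^{N-j}$, and identify $\bar F(t)^{N-k}$ as the survival function of the minimum of $N-k$ i.i.d.\ phase-type variables with representation $(\textbf{a}^{\otimes(N-k)},T^{\oplus(N-k)})$. The only differences are cosmetic: you verify the Kronecker identity $\exp\left(t\,T^{\oplus m}\right)=(e^{Tt})^{\otimes m}$ directly via the mixed-product property where the paper cites the literature, and you spell out the i.i.d.\ reduction (innovative measurements, independence, PASTA) slightly more explicitly.
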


\begin{proof}
Recall that a client leaves the system as soon as he retrieves exactly $s$ measurements. Thus, we need to compute the distribution of the time  between the moment a client arrives at the network and the moment when  the $s$-th broadcast occurs, all $s$ broadcasts from  distinct sensors. This can be seen as the distribution of the $s$-th order statistic of $N$ phase-type distributed random variables with representation $(\textbf{a},T)$, as introduced above. The distribution of the $s$-th order statistic is (see, for instance, \cite{ahsanullah2013introduction}), for $N$ variables, is
\begin{equation}
\mathbb{P}(W_s \leq t)  = \sum_{j=s}^N \binom{N}{j} P(W \leq t)^j (P(W > t)^{N-j}.
\end{equation}
Starting from the above expression we derive
\begin{align}
\mathbb{P}(W_s \leq t)  	& =1-  \sum_{j=0}^{s-1} \binom{N}{j} P(W \leq t)^j (P(W > t)^{N-j} \notag \\
			&= 1-  \sum_{j=0}^{s-1} \binom{N}{j} \left(1- P(W>t) \right)^j  (P(W > t)^{N-j} \notag \\
			&= 1-  \sum_{j=0}^{s-1}  \hspace{-3pt} \binom{N}{j} \hspace{-3pt} \left( \sum_{k=0}^j \binom{j}{k} (-1)^{j-k} P(W>t)^{j-k} +1  \hspace{-3pt} \right)  \hspace{-3pt}  (P(W > t)^{N-j} \label{lineW1}\\
			&=1- \sum_{j=0}^{s-1} \binom{N}{j} \left( \sum_{k=0}^j \binom{j}{k} (-1)^{j-k} P(W>t)^{N-k} \right) \label{lineW2},
\end{align}
where in \eqref{lineW1} we expanded the polynomial $(1- P(W>t))^j$. 

Now, observe that the distribution of $P(W>t)^{N-k}$ in \eqref{lineW2} is:
\begin{align}\label{minW}
\mathbb{P}(W>t)^{N-k} = P(\min\{Y_1, Y_2, \ldots, Y_{N-k}\} >t), 
\end{align}
where the $Y_i, 1 \leq i \leq N-k$ are i.i.d. phase-type distributed random variables with representation $(a,T)$. Therefore, $P(W>t)^{N-k}$ is the first order statistic of  a phase-type distributed random variable for which it is well known (see, for instance, \cite{nielsen1988modelling}) that it is phase-type distributed with representation $(\textbf{a}^{\otimes(N-K)},T^{\oplus(N-k)})$. The result follows directly by inserting the distribution function of this phase-type distribution into~\eqref{lineW2}.
\end{proof}
The result above is general in the sense that it does not depend on the specific representation of the phase-type distributed random variable $W$.

We are next interested in determining $\mathbb{E}[W_s]$, the expected time for a client to retrieve exactly $s$ measurements. In principle, $\mathbb{E}[W_s]$ can be obtained directly from Lemma~\ref{lemmaPWs}. However, the moments of order statistics of phase-type distributed random variables are known in the literature~\cite{zhang2011computing}. Therefore, we will resort to the results from~\cite{zhang2011computing}. Let $m_s^k$ denote the $k$-th moment of the $s$-th order statistic of  $N$ phase-type distributed random variables with representation $(\textbf{a},T)$.
\begin{theorem}\cite[Thm 4.1]{zhang2011computing}\label{zhang}
\begin{align*}
m_s^k=m_{s-1}^{k} + \sum_{j=1}^s (-1)^{j-1} \binom{N-s+j}{j-1} L_{N-s+j}^{(k)},
\end{align*}
where $L_{j}^{(k)}= \binom{N}{j} (-1)^k k! \left( \textbf{a}^{\otimes j} \right) \left( T^{\oplus j} \right)^{-k} \textbf{e}_{2^{j}}$, $1 \leq j \leq s$, and $m_0^k=0$.
\end{theorem}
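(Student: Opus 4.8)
The plan is to work directly from the survival function of the order statistic and to exploit the representation of the minimum of i.i.d.\ phase-type variables that was already used in Lemma~\ref{lemmaPWs}. Write $\bar F(t)=\mathbb{P}(W>t)=\textbf{a}\,e^{Tt}\textbf{e}_2$ for the common survival function of the $N$ broadcast times. Since the $k$-th moment of a nonnegative variable is $m_s^k=k\int_0^\infty t^{k-1}\mathbb{P}(W_s>t)\,dt$, I would first record that the number of the $N$ copies lying below $t$ is binomial, so that $W_s>t$ is the event that at most $s-1$ of them are $\le t$:
\[
\mathbb{P}(W_s>t)=\sum_{l=0}^{s-1}\binom{N}{l}F(t)^l\,\bar F(t)^{N-l},\qquad F=1-\bar F.
\]

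The key observation, and the step that makes the recurrence so clean, is that consecutive order statistics telescope: the expressions for $m_s^k$ and $m_{s-1}^k$ agree term-by-term except for the single index $l=s-1$, so that
\[
m_s^k-m_{s-1}^k=\binom{N}{s-1}\,k\int_0^\infty t^{k-1}\,F(t)^{s-1}\bar F(t)^{N-s+1}\,dt.
\]
I would then expand $F^{s-1}=(1-\bar F)^{s-1}$ by the binomial theorem, turning the integrand into a linear combination of pure powers $\bar F(t)^{r}$ with $r=N-s+1+i$ ranging from $N-s+1$ up to $N$ (all at least $1$ because $s\le N$).

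Each power $\bar F(t)^{r}=\mathbb{P}(\min\{W^{(1)},\dots,W^{(r)}\}>t)$ is the survival function of the minimum of $r$ i.i.d.\ copies, which is phase-type $(\textbf{a}^{\otimes r},T^{\oplus r})$. Applying the standard phase-type moment identity $k\int_0^\infty t^{k-1}\boldsymbol{\alpha}\,e^{St}\,\textbf{e}\,dt=(-1)^k k!\,\boldsymbol{\alpha}\,S^{-k}\textbf{e}$ (which follows from $\int_0^\infty t^{k-1}e^{St}\,dt=(k-1)!\,(-S)^{-k}$ for an invertible subgenerator $S$), each integral becomes exactly $(-1)^k k!\,(\textbf{a}^{\otimes r})(T^{\oplus r})^{-k}\textbf{e}_{2^{r}}$, i.e.\ the object inside $L_r^{(k)}$ up to the factor $\binom{N}{r}$. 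Thus $m_s^k-m_{s-1}^k$ becomes a signed sum of the quantities $L_r^{(k)}$.

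The remaining step is bookkeeping on the coefficients. After substituting $r=N-s+j$ and writing the contribution of $L_r^{(k)}$ with its own $\binom{N}{r}$ factored in, I would verify the elementary identity
\[
\binom{N}{s-1}\binom{s-1}{j-1}=\binom{N-s+j}{j-1}\binom{N}{N-s+j},
\]
which recasts the coefficient of $L_{N-s+j}^{(k)}$ into the claimed form $(-1)^{j-1}\binom{N-s+j}{j-1}$ and yields the recurrence, with $m_0^k=0$ serving as the base case. The only genuinely delicate points are the telescoping reduction above and justifying the interchange of the finite sum with the integral together with finiteness of the moment integrals, both of which are secured by invertibility of the subgenerators $T^{\oplus r}$; the binomial identity itself is routine.
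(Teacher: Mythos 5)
Your proof is correct, but note that the paper itself never proves this statement: it is imported by citation as Theorem 4.1 of \cite{zhang2011computing}, so there is no in-paper argument to compare against, and yours serves as a self-contained replacement. Writing $F(t)=\mathbb{P}(W\leq t)$ and $\bar F=1-F$, your chain of steps checks out: (i) $\mathbb{P}(W_s>t)=\sum_{l=0}^{s-1}\binom{N}{l}F(t)^l\bar F(t)^{N-l}$; (ii) the telescoping $m_s^k-m_{s-1}^k=\binom{N}{s-1}\,k\int_0^\infty t^{k-1}F(t)^{s-1}\bar F(t)^{N-s+1}\,dt$, with base case $m_0^k=0$; (iii) the binomial expansion of $(1-\bar F)^{s-1}$; (iv) the identification of $\bar F^{\,r}$ as the survival function of the minimum of $r$ i.i.d.\ copies, phase-type $(\textbf{a}^{\otimes r},T^{\oplus r})$ --- the same fact the paper invokes in the proof of Lemma \ref{lemmaPWs}; and (v) the moment identity $k\int_0^\infty t^{k-1}\,\textbf{a}^{\otimes r}e^{T^{\oplus r}t}\textbf{e}_{2^r}\,dt=(-1)^k k!\,\textbf{a}^{\otimes r}\left(T^{\oplus r}\right)^{-k}\textbf{e}_{2^r}$, which is valid here because the eigenvalues of $T^{\oplus r}$ are sums of eigenvalues of $T$ and hence have strictly negative real parts, securing both invertibility and convergence of all the (finitely many) integrals you interchange. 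Your coefficient bookkeeping is also right: both $\binom{N}{s-1}\binom{s-1}{j-1}$ and $\binom{N-s+j}{j-1}\binom{N}{N-s+j}$ equal $N!\big/\bigl((N-s+1)!\,(j-1)!\,(s-j)!\bigr)$, which converts your expansion into exactly the claimed recurrence; as a sanity check, $s=1$ gives $m_1^k=L_N^{(k)}$, the $k$-th moment of the minimum of all $N$ variables, as it should. What your derivation buys is transparency: the recurrence follows from nothing beyond the two standard phase-type facts already in play elsewhere in the paper. What the citation buys is brevity and the original theorem's generality (arbitrary phase dimension $p$, with $\textbf{e}_{p^j}$ in place of $\textbf{e}_{2^j}$), whereas your argument is written for the $2\times 2$ representation at hand, though it generalizes verbatim.
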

Taking $k=1$, $\mathbb{E}[W_s]$ can be computed from Theorem~\ref{zhang}. Note, however, that straightforward computation of moments based on Theorem~\ref{zhang} involves the matrices $T^{\otimes j}$, where $j$ takes values up to $N$. The dimension of $T^{\otimes N}$ is $2^N \times 2^N$. Therefore, the complexity of these computations is exponentially increasing in $N$. Since we are interested in the behaviour of the system for larger values of $N$,  we will derive in the next result an expression for $\mathbb{E}[W_s]$ that has at most polynomial complexity in all model parameters.
\begin{theorem}\label{firstmomentW}
The expected time, $\mathbb{E}[W_s]$, for a client to retrieve $s$ measurements from distinct sensors is:
\[
 \sum_{j=0}^{s-1} \binom{N}{j}   \sum_{k=0}^j \binom{j}{k} (-1)^{j-k}  \sum_{v=0}^{N-k} \binom{N-k}{v} \omega^v (1-\omega)^{N-k-v} \frac{1}{\lambda_e (N-k-v) +\frac{\mu}{N} v} , 
\] where $\omega=1-\nu(0) \frac{\frac{\mu}{N}}{\frac{\mu}{N} -\lambda_e}$.
\end{theorem}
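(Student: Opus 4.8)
\textit{Proof proposal.}

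The plan is to compute $\mathbb{E}[W_s]$ through the tail integral $\mathbb{E}[W_s]=\int_0^\infty \mathbb{P}(W_s>t)\,dt$, valid since $W_s\ge 0$, and to substitute the survival function already obtained in Lemma~\ref{lemmaPWs}. Taking the complement of the distribution function in line~\eqref{lineW2} of the proof of that lemma gives
\[
\mathbb{P}(W_s>t)=\sum_{j=0}^{s-1}\binom{N}{j}\sum_{k=0}^{j}\binom{j}{k}(-1)^{j-k}\,\mathbb{P}(W>t)^{N-k},
\]
so the entire computation reduces to evaluating $\int_0^\infty \mathbb{P}(W>t)^{N-k}\,dt$ for each $k$, after which the finite sums over $j$ and $k$ can simply be carried along.

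The key preparatory step, and the one I expect to carry the argument, is to rewrite the single-sensor survival function of Lemma~\ref{lemmaPW} as a linear combination of two exponentials with weights summing to one. Starting from $\mathbb{P}(W>t)=e^{-\frac{\mu}{N}t}-\frac{\mu/N}{\mu/N-\lambda_e}\nu(0)\bigl(e^{-\frac{\mu}{N}t}-e^{-\lambda_e t}\bigr)$ and collecting the coefficients of $e^{-\frac{\mu}{N}t}$ and of $e^{-\lambda_e t}$ separately, one obtains
\[
\mathbb{P}(W>t)=\omega\,e^{-\frac{\mu}{N}t}+(1-\omega)\,e^{-\lambda_e t},\qquad \omega=1-\nu(0)\frac{\mu/N}{\mu/N-\lambda_e},
\]
which is precisely the source of the constant $\omega$ in the statement. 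This recognition is the crux: once $\mathbb{P}(W>t)$ is a two-term expression, its $(N-k)$-th power is handled by the binomial theorem rather than by manipulating the $2^{N-k}\times 2^{N-k}$ Kronecker matrices that appear in Lemma~\ref{lemmaPWs}, which is exactly what keeps the final complexity polynomial.

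With this decomposition in hand the remaining steps are routine. Expanding the power yields
\[
\mathbb{P}(W>t)^{N-k}=\sum_{v=0}^{N-k}\binom{N-k}{v}\omega^v(1-\omega)^{N-k-v}\exp\Bigl(-\bigl[\tfrac{\mu}{N}v+\lambda_e(N-k-v)\bigr]t\Bigr),
\]
so that every summand is a single decaying exponential. Because all sums over $j$, $k$, and $v$ are finite, I may interchange summation and integration without any convergence subtlety, and integrating each term over $t\in[0,\infty)$ gives $1/\bigl(\lambda_e(N-k-v)+\frac{\mu}{N}v\bigr)$. The denominator never vanishes: since $k\le j\le s-1\le N-1$ we have $N-k\ge 1$, hence at the endpoints $v=0$ and $v=N-k$ the rate equals $\lambda_e(N-k)>0$ and $\frac{\mu}{N}(N-k)>0$ respectively, and it stays strictly positive in between. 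Substituting these integrals back and reassembling the three nested sums reproduces the claimed expression for $\mathbb{E}[W_s]$, which completes the proof.
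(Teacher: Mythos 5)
Your proposal is correct and follows essentially the same route as the paper's own proof: the tail integral $\int_0^\infty \mathbb{P}(W_s>t)\,dt$, the survival function from \eqref{lineW2}, the rewriting of $\mathbb{P}(W>t)$ as $\omega e^{-\frac{\mu}{N}t}+(1-\omega)e^{-\lambda_e t}$, binomial expansion, and term-by-term integration. Your version is in fact slightly cleaner, as it carries the correct sign $(-1)^{j-k}$ throughout (the paper's displayed steps contain typographical slips such as $(-1)^{N-k}$) and explicitly verifies that the denominators never vanish.
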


\begin{proof}
The expected retrieval time for $s$ measurements from distinct sensors can be expressed using Theorem \ref{lemmaPWs} and Lemma \ref{lemmaPW} as follows.
\begin{align}
&\mathbb{E}[W_s] \notag \\
			&= \int_0^\infty \mathbb{P}(W_s>t) dt \notag \\
			&= \sum_{j=0}^{s-1} \binom{N}{j}  \int_0^\infty \left( \sum_{k=0}^{j} \binom{j}{k}  (-1)^{N-k} (1-\mathbb{P}(W \leq t))^{N-k}  \right) dt \label{w0}  \\
			&= \sum_{j=0}^{s-1} \binom{N}{j} \sum_{k=0}^{j} \binom{j}{k}  (-1)^{N-k} \int_0^\infty  \left( e^{-\frac{\mu}{N} t}\omega +e^{-\lambda_e t}(1-\omega)^{N-k} \right)^{N-k}  dt \label{w2}  \\
			&=  \sum_{j=0}^{s-1} \hspace{-3pt} \binom{N}{j} \hspace{-3pt} \sum_{k=0}^{j} \hspace{-3pt} \binom{j}{k}   \hspace{-3pt} (-1)^{N-k} 
			 \hspace{-4pt} \int_0^\infty \hspace{-2pt}  \sum_{v=0}^{N-k} \hspace{-3pt} \binom{N-k}{v} \hspace{-3pt} \left(e^{-\frac{\mu}{N} t} \omega \right)^v \hspace{-3pt} \left(e^{-\lambda_e t} (1-\omega) \right)^{N-k-v} \hspace{-5pt}  dt \notag \allowdisplaybreaks[1]  \\
			& = \sum_{j=0}^{s-1} \binom{N}{j} \sum_{k=0}^{j} \binom{j}{k}  (-1)^{N-k} \sum_{v=0}^{N-k} \binom{N-k}{v}   \omega^v (1-\omega)^{N-k-v} \notag\\ 					& \cdot \int_0^\infty \hspace{-3pt}   \left(e^{-\frac{\mu}{N} t}  \right)^v  \left(e^{-\lambda_e t} \right)^{N-k-v}   dt \notag \\
			&=  \sum_{j=0}^{s-1} \hspace{-3pt} \binom{N}{j} \hspace{-3pt} \sum_{k=0}^{j} \hspace{-3pt} \binom{j}{k} \hspace{-3pt}  (-1)^{N-k}  \sum_{v=0}^{N-k} \hspace{-3pt} \binom{N-k}{v}  \frac{\omega^v (1-\omega)^{N-k-v} }{\lambda_e(N-k-v)+\frac{\mu}{N} v},
\end{align} where \eqref{w0} follows from the derivations in  \eqref{lineW2} and \eqref{w2} follows from Lemma \ref{lemmaPW}, where we denoted $\omega=1-\nu(0) \frac{\frac{\mu}{N}}{\frac{\mu}{N} -\lambda_e}$. 
\end{proof}

\subsection{Asymptotic Analysis of Retrieval Time of Estimate}

In this section we determine $\mathbb{E}[W_s]$ for asymptotically large rate of energy harvesting, battery capacity and size of the sensor network. First, we introduce the following  lemma.
\begin{lemma}\label{technicallemma}
For any $0 \leq k \leq s$, $k \in \mathbb{N}$,
\begin{align} \label{sumLimit}
\binom{N}{k} \sum_{v=0}^k \binom{k}{v} (-1)^{k-v} \frac{N-k}{N-v} = 1.
\end{align}
\end{lemma}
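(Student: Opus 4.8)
The plan is to reduce the claim to a single well-known partial-fraction identity and then do bookkeeping with factorials. Throughout I assume $N-k\ge 1$, so that every denominator $N-v$ with $0\le v\le k$ is positive; this is the only regime in which the statement is applied, since in Theorem~\ref{firstmomentW} the index $k$ never exceeds $s-1\le N-1$.

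First I would isolate the inner sum. Since the factor $N-k$ does not depend on the summation index, pull it to the front and substitute $u=k-v$. As $v$ runs from $0$ to $k$, the index $u$ runs from $k$ down to $0$, with $\binom{k}{v}=\binom{k}{u}$, $(-1)^{k-v}=(-1)^{u}$, and $N-v=(N-k)+u$. Hence
\begin{equation*}
\sum_{v=0}^{k}\binom{k}{v}(-1)^{k-v}\frac{1}{N-v}
=\sum_{u=0}^{k}(-1)^{u}\binom{k}{u}\frac{1}{(N-k)+u}.
\end{equation*}

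Next I would invoke the standard identity
\begin{equation*}
\sum_{u=0}^{k}(-1)^{u}\binom{k}{u}\frac{1}{x+u}=\frac{k!}{x(x+1)\cdots(x+k)},
\end{equation*}
applied with $x=N-k$. For a self-contained argument this follows from the partial-fraction expansion of the right-hand side: writing $k!/\prod_{i=0}^{k}(x+i)=\sum_{i=0}^{k}A_i/(x+i)$ and taking the residue at $x=-i$ gives $A_i=(-1)^i\binom{k}{i}$, because $\prod_{j\ne i}(j-i)=(-1)^i\,i!\,(k-i)!$ (the factor $(-1)^i i!$ comes from $j<i$ and $(k-i)!$ from $j>i$). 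With $x=N-k$ the denominator is $(N-k)(N-k+1)\cdots N=N!/(N-k-1)!$, so the inner sum equals $k!\,(N-k-1)!/N!$.

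Finally I would recombine the pieces. Multiplying by the prefactor,
\begin{equation*}
\binom{N}{k}(N-k)\cdot\frac{k!\,(N-k-1)!}{N!}
=\frac{N!}{k!\,(N-k)!}\,(N-k)\,\frac{k!\,(N-k-1)!}{N!}
=\frac{(N-k)\,(N-k-1)!}{(N-k)!}=1,
\end{equation*}
which is the claim. The only real content is the partial-fraction identity; everything else is routine cancellation. The main obstacle, then, is establishing that identity cleanly: the residue computation is the most error-prone step because of the signs in $\prod_{j<i}(j-i)$ versus $\prod_{j>i}(j-i)$, so I would verify those two products separately before combining them. An induction on $k$ is an equally valid fallback if one prefers to avoid the residue bookkeeping.
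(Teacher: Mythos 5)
Your proof is correct, but it follows a genuinely different route from the paper's. The paper proves the identity by induction on $k$: in the inductive step it splits $\frac{N-(k+1)}{N-v}=1+\frac{v-(k+1)}{N-v}$, kills the first piece using $\sum_{v=0}^{k+1}\binom{k+1}{v}(-1)^{k+1-v}=0$, and then massages the remaining sum back into the form of the induction hypothesis. You instead evaluate the inner sum in closed form via the classical identity
\begin{equation*}
\sum_{u=0}^{k}(-1)^{u}\binom{k}{u}\frac{1}{x+u}=\frac{k!}{x(x+1)\cdots(x+k)},
\end{equation*}
applied at $x=N-k$, and finish by factorial cancellation; your substitution $u=k-v$, the residue computation $A_i=(-1)^i\binom{k}{i}$, and the final simplification are all accurate. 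What your approach buys is strictly more information: you obtain the exact value $k!\,(N-k-1)!/N!$ of the alternating sum $\sum_{v}\binom{k}{v}(-1)^{k-v}/(N-v)$ itself, not merely the normalized combination in the lemma, and the argument is non-recursive. What the paper's induction buys is self-containedness at a lower technical level --- no partial fractions or residues, only binomial cancellations --- which is arguably the safer choice in a paper where this is a throwaway technical lemma. One further point in your favor: you explicitly restrict to $N-k\ge 1$, which is needed (at $k=N$ the term $v=k$ reads $0/0$) and is satisfied in every application of the lemma; the paper's statement $0\le k\le s$ is silent on this edge case.
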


\begin{proof}
This proof follows from induction on $k$.

It is easy to see that \eqref{sumLimit} holds for $k=0$. We assume that \eqref{sumLimit} holds for some $k>0$. We next show that  \eqref{sumLimit} holds for $k+1$.
\begin{align*}
&\binom{N}{k+1} \sum_{v=0}^{k+1} \binom{k+1}{v} (-1)^{k+1-v} \frac{N-(k+1)}{N-v}  \\ 
		&= \binom{N}{k+1} \sum_{v=0}^{k+1} \binom{k+1}{v} (-1)^{k+1-v} \frac{N-v+v-(k+1)}{N-v} \\
		&=  \binom{N}{k+1} \sum_{v=0}^{k+1} \binom{k+1}{v} (-1)^{k+1-v} (1)^v \\
		&+  \binom{N}{k+1} \sum_{v=0}^{k+1} \binom{k+1}{v} (-1)^{k+1-v} \frac{v-(k+1)}{N-v} \\
		&=0 + \binom{N}{k+1} \sum_{v=0}^{k+1} \binom{k+1}{v} (-1)^{k+1-v-1} \frac{(k+1)-v}{N-v} \allowdisplaybreaks[1] \\
		&=  \binom{N}{k+1} \sum_{v=0}^{k} \binom{k+1}{v} (-1)^{k-v} \frac{(k+1)-v}{N-v} \\
		&=  \binom{N}{k+1} (k+1) \sum_{v=0}^{k} \frac{k!}{v! (k+1-v-1)!} (-1)^{k-v} \frac{1}{N-v} \\
		&= \binom{N}{k}  \sum_{v=0}^{k} \binom{k}{v} (-1)^{k-v} \frac{N-k}{N-v} =1,
\end{align*} where the last equality follows from the induction hypothesis.

\end{proof}

\begin{theorem}\label{theoremLambdaInfinity}
For $1 \leq N < \infty$ and $0 < B < \infty$,
 \[
\lim_{\lambda_e \rightarrow \infty} \mathbb{E}[W_s]= \sum_{j=0}^{s-1} \frac{1}{\mu(1-j/N)}.
\]
\end{theorem}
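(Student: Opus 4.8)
The plan is to start from the closed-form expression for $\mathbb{E}[W_s]$ given in Theorem~\ref{firstmomentW} and to pass the limit $\lambda_e \to \infty$ through its finitely many summations. Because every sum is finite, interchanging the limit with the sums is immediate, so the whole argument reduces to two ingredients: determining the $\lambda_e \to \infty$ behaviour of the summand, and then performing a combinatorial simplification of what remains. The quantity that controls the summand is $\omega = 1 - \nu(0)\frac{\mu/N}{\mu/N - \lambda_e}$, so the first task is to pin down its limit.

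First I would record the asymptotics coming from Lemma~\ref{pisensor}. Writing $\rho = \lambda_e N/\mu$, we have $\nu(0) = (\rho-1)/(\rho^{B+1}-1)$ for all large $\lambda_e$ (so that $\lambda_e \neq \mu/N$), and since $B \geq 1$ this behaves like $\rho^{-B} \to 0$. As also $\frac{\mu/N}{\mu/N - \lambda_e} \to 0$, it follows that $\omega \to 1$ and $1-\omega \to 0$. Next I would examine the innermost sum over $v$ in Theorem~\ref{firstmomentW}. For each fixed $k$ (note $0 \le k \le j \le s-1 < N$, hence $N-k \ge 1$ and no denominator vanishes), only the term $v = N-k$ survives in the limit: its binomial weight is $\omega^{N-k}(1-\omega)^{0} \to 1$ and its denominator is $\frac{\mu}{N}(N-k)$, which is free of $\lambda_e$, so the term tends to $\frac{N}{\mu(N-k)} = \frac{1}{\mu(1-k/N)}$. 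Every term with $v < N-k$ vanishes, since its denominator $\lambda_e(N-k-v)+\frac{\mu}{N}v$ grows like $\lambda_e$ while the remaining factors stay bounded (indeed $(1-\omega)^{N-k-v} \to 0$ as well). This yields
\[
\lim_{\lambda_e \to \infty} \mathbb{E}[W_s] = \sum_{j=0}^{s-1}\binom{N}{j}\sum_{k=0}^{j}\binom{j}{k}(-1)^{j-k}\frac{N}{\mu(N-k)}.
\]

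Finally I would collapse the double sum using Lemma~\ref{technicallemma}. Dividing that identity by $N-k$ and multiplying by $N$, then relabelling its outer index to $j$ and its inner index to $k$, gives
\[
\binom{N}{j}\sum_{k=0}^{j}\binom{j}{k}(-1)^{j-k}\frac{N}{N-k} = \frac{N}{N-j}.
\]
Hence for each $j$ the inner sum evaluates to $\frac{N}{\mu(N-j)} = \frac{1}{\mu(1-j/N)}$, and summing over $j$ from $0$ to $s-1$ produces exactly $\sum_{j=0}^{s-1}\frac{1}{\mu(1-j/N)}$, as claimed.

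The main obstacle is bookkeeping rather than analysis. No delicate interchange of limit and summation is required, as all sums are finite and the only genuine limit computation is the elementary asymptotics of $\nu(0)$ and $\omega$ from Lemma~\ref{pisensor}. The two points that must be identified correctly are that among the $v$-terms precisely the index $v = N-k$ has an $\lambda_e$-independent denominator and therefore survives, and that Lemma~\ref{technicallemma}, after the relabelling above, is exactly the combinatorial identity needed to reduce the surviving double sum to the target.
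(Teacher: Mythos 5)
Your proposal is correct and follows exactly the paper's route: take $\lambda_e \to \infty$ in the closed form of Theorem~\ref{firstmomentW}, observe that $\omega \to 1$ so only the $v = N-k$ term survives, and then collapse the resulting double sum via Lemma~\ref{technicallemma}. The paper states this in two lines; your version merely fills in the details it leaves implicit (the asymptotics of $\nu(0)$ and $\omega$, and the index relabelling in the lemma), all of which are accurate.
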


\begin{proof}

Taking $\lambda_e \rightarrow \infty$ in Theorem \ref{firstmomentW}, we have that
\begin{align*}
\lim_{\lambda_e \rightarrow \infty} \mathbb{E}[W_s] 
				&=  \sum_{j=0}^{s-1} \binom{N}{j} \sum_{v=0}^{j} \binom{j}{v} (-1)^{j-v} \frac{1}{\mu (1-v/N)}.
\end{align*}
The result now follows from Lemma \ref{technicallemma}.
\end{proof}

We next consider the situations when the capacity of the sensors to store energy in the battery is unlimited, i.e. $B \rightarrow \infty$. 

For $\lambda_e < \mu/N$ and $B \rightarrow \infty$, the battery of a sensor is most of the time empty as the rate at which this sensor receives energy is lower than the rate at which this sensor broadcasts, and thus, consumes energy. As a consequence, in this case, the waiting time for a client to retrieve $s$ measurements from distinct sensors largely depends on $\lambda_e$, which supports the broadcasting process.

For $\lambda_e >\mu/N$ and $B \rightarrow \infty$, a sensor has most of the time energy for broadcasting since the rate at which it harvests new energy is higher than the rate at which it broadcasts. In this case, the waiting time for a client to retrieve $s$ measurements from distinct sensors depends on the broadcasting rate $\mu/N$. 

\begin{theorem}
For $1 \leq N < \infty$ and $0 < \lambda_e < \infty$,
 \[ 
\lim_{B \rightarrow \infty} \mathbb{E}[W_s]=
\begin{cases}
\displaystyle
\sum_{j=0}^{s-1} \frac{1}{\lambda_e(N-j)}, &\lambda_e<\frac{\mu}{N}\\
\displaystyle
\sum_{j=0}^{s-1} \frac{1}{\frac{\mu}{N}(N-j)}, & \lambda_e \geq \frac{\mu}{N}.
\end{cases}
\]
\end{theorem}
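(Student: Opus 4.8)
The plan is to start from the closed-form expression for $\mathbb{E}[W_s]$ in Theorem~\ref{firstmomentW} and to observe that the battery capacity $B$ enters this expression \emph{only} through the quantity $\omega = 1 - \nu(0)\frac{\mu/N}{\mu/N-\lambda_e}$, via the depletion probability $\nu(0)=\nu_0$ of Lemma~\ref{pisensor}. Since the right-hand side of Theorem~\ref{firstmomentW} is a \emph{finite} sum in which each summand is a polynomial in $\omega$ of degree at most $N-k$ (the factors $\frac{1}{\lambda_e(N-k-v)+\frac{\mu}{N}v}$ are constants independent of $B$), the map $\omega\mapsto\mathbb{E}[W_s]$ is continuous, so no interchange of limit and summation is required: it suffices to compute $\lim_{B\to\infty}\omega$ and substitute.

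First I would compute $\lim_{B\to\infty}\nu(0)$ from Lemma~\ref{pisensor}. Writing $\rho=\lambda_e N/\mu$, we have $\nu_0=(\rho-1)/(\rho^{B+1}-1)$ for $\rho\neq1$. When $\lambda_e<\mu/N$, i.e.\ $\rho<1$, the term $\rho^{B+1}\to0$, so $\nu(0)\to1-\rho=(\mu/N-\lambda_e)/(\mu/N)$, and a short computation then gives $\omega\to0$. When $\lambda_e>\mu/N$, i.e.\ $\rho>1$, we have $\rho^{B+1}\to\infty$, so $\nu(0)\to0$ and hence $\omega\to1$.

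Substituting these limiting values into Theorem~\ref{firstmomentW} collapses the inner binomial sum over $v$ to a single term in each regime: for $\omega=0$ only $v=0$ survives, leaving the factor $\frac{1}{\lambda_e(N-k)}$, while for $\omega=1$ only $v=N-k$ survives, leaving $\frac{1}{(\mu/N)(N-k)}$. In both cases the remaining double sum
\[
\sum_{j=0}^{s-1}\binom{N}{j}\sum_{k=0}^{j}\binom{j}{k}(-1)^{j-k}\frac{1}{c\,(N-k)}
\]
(with $c=\lambda_e$ or $c=\mu/N$) is simplified exactly as in the proof of Theorem~\ref{theoremLambdaInfinity}: Lemma~\ref{technicallemma} yields $\binom{N}{j}\sum_{k=0}^{j}\binom{j}{k}(-1)^{j-k}\frac{1}{N-k}=\frac{1}{N-j}$ for each fixed $j$, so the double sum reduces to $\sum_{j=0}^{s-1}\frac{1}{c\,(N-j)}$, giving the two claimed expressions.

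The one point requiring separate care, and the main obstacle, is the boundary case $\lambda_e=\mu/N$, for which $\omega$ is undefined and Theorem~\ref{firstmomentW} does not apply. Here Lemma~\ref{pisensor} gives $\nu_0=1/(B+1)\to0$, so as $B\to\infty$ the single-sensor broadcast time $W$ converges in distribution to an exponential random variable of rate $\mu/N$ and $W_s$ to the $s$-th order statistic of $N$ i.i.d.\ such variables, whose mean is the standard quantity $\sum_{j=0}^{s-1}\frac{1}{(\mu/N)(N-j)}$. This coincides with the $\lambda_e\geq\mu/N$ branch and equals the common value of the two one-sided limits above, confirming that the limit is continuous across $\lambda_e=\mu/N$.
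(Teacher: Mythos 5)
Your proposal follows essentially the same route as the paper for the two generic regimes: compute $\lim_{B\to\infty}\nu(0)$ from Lemma~\ref{pisensor} (so that $\omega\to 0$ when $\lambda_e<\mu/N$ and $\omega\to 1$ when $\lambda_e>\mu/N$), substitute into Theorem~\ref{firstmomentW} so that the inner sum over $v$ collapses to the $v=0$ (resp.\ $v=N-k$) term, and finish with Lemma~\ref{technicallemma}; your observation that $B$ enters only through $\omega$ and that the expression is a finite polynomial in $\omega$, so no interchange of limit and sum needs justifying, is a point the paper leaves implicit. The genuine difference is at the boundary $\lambda_e=\mu/N$: the paper simply writes ``$\lambda_e\geq\mu/N$, then $\nu(0)\to 0$'' and plugs into Theorem~\ref{firstmomentW}, even though $\omega$ (and the explicit distribution function of Lemma~\ref{lemmaPW} on which that theorem rests) is undefined there, so the paper's treatment of this sub-case is strictly speaking incomplete; you correctly isolate it and argue via convergence of $W$ to an exponential of rate $\mu/N$. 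One small hole remains in your patch: convergence in distribution of $W_s$ does not by itself yield convergence of $\mathbb{E}[W_s]$; you need uniform integrability or domination. This is easy to supply: since $\nu(0)=1/(B+1)$ decreases in $B$ and $W$ is a mixture of an Erlang-2 (weight $\nu(0)$) and an exponential, $W$ is stochastically decreasing in $B$, hence so is $W_s$, and $\mathbb{P}(W_s>t)$ is dominated by its (integrable) value at $B=1$; dominated convergence applied to $\mathbb{E}[W_s]=\int_0^\infty \mathbb{P}(W_s>t)\,dt$ then closes the gap. Alternatively, redo the integral in the proof of Theorem~\ref{firstmomentW} with $\mathbb{P}(W>t)=e^{-\mu t/N}\left(1+\nu(0)\tfrac{\mu}{N}t\right)$ and let $\nu(0)\to 0$ in the resulting finite expression.
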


\begin{proof} 

We first consider the case $\lambda_e < \mu/N$. 
Then, from Lemma \ref{pisensor}, $\lim_{B \rightarrow \infty } \nu(0) = 1- \frac{\lambda_e}{\mu/N}$.  Using this result in Theorem \ref{firstmomentW} we have that
\[
\lim_{B \rightarrow \infty} \mathbb{E}[W_s]= \sum_{j=0}^{s-1} \binom{N}{j}   \sum_{k=0}^j \binom{j}{k} (-1)^{j-k}  \frac{1}{\lambda_e(N-k)}.
\]
The result follows from Lemma \ref{technicallemma}.

We next consider the case $\lambda_e \geq \mu/N$. Then $\nu(0) \rightarrow 0$. Using this result in Theorem \ref{firstmomentW} we have that
\[
\lim_{B \rightarrow \infty} \mathbb{E}[W_s]= \sum_{j=0}^{s-1} \binom{N}{j}   \sum_{k=0}^j \binom{j}{k} (-1)^{j-k}  \frac{1}{\mu(N-k)/N}.
\]
Again, the result follows from Lemma \ref{technicallemma}.

\end{proof}

\begin{theorem} For $0 < B < \infty$ and $0 < \lambda_e < \infty$, 
\[
\lim_{N \rightarrow \infty} \mathbb{E}[W_s] = \frac{s}{\mu}.
\]
\end{theorem}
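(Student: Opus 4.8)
The plan is to work from the integral representation of the mean rather than from the closed form of Theorem~\ref{firstmomentW}. Writing $p_N(t)=\mathbb{P}(W\le t)$ and using the complementary form of the order-statistic distribution appearing in the proof of Lemma~\ref{lemmaPWs}, I would start from
\[
\mathbb{E}[W_s]=\int_0^\infty \mathbb{P}(W_s>t)\,dt,\qquad \mathbb{P}(W_s>t)=\sum_{j=0}^{s-1}\binom{N}{j}p_N(t)^j\bigl(1-p_N(t)\bigr)^{N-j}.
\]
The governing intuition is that, as $N\to\infty$, a single sensor broadcasts at the vanishing rate $\mu/N$ but is essentially never out of energy, so the number of sensors that have broadcast by time $t$ behaves like a Poisson count of mean $\mu t$; consequently $W_s$ converges in distribution to an Erlang$(s,\mu)$ random variable and its mean to $s/\mu$. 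The proof then splits into two parts: identifying the pointwise limit of the integrand, and justifying the exchange of limit and integral.

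For the pointwise limit, I would first note from Lemma~\ref{pisensor} that, since $\lambda_e N/\mu\to\infty$ and $B>0$, we have $\nu(0)=\nu_0\sim(\mu/(\lambda_e N))^{B}\to 0$; hence $\omega=1-\nu(0)\frac{\mu/N}{\mu/N-\lambda_e}\to 1$ and in fact $N(1-\omega)\to 0$. Using $\mathbb{P}(W>t)=\omega e^{-\mu t/N}+(1-\omega)e^{-\lambda_e t}$ from Lemma~\ref{lemmaPW}, I would rewrite
\[
p_N(t)=(1-\omega)\bigl(1-e^{-\lambda_e t}\bigr)+\omega\bigl(1-e^{-\mu t/N}\bigr),
\]
so that $N p_N(t)=N(1-\omega)(1-e^{-\lambda_e t})+\omega\,N(1-e^{-\mu t/N})\to 0+\mu t=\mu t$. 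The standard term-by-term Poisson limit (valid because $s$ is fixed) then yields $\binom{N}{j}p_N(t)^j(1-p_N(t))^{N-j}\to \frac{(\mu t)^j}{j!}e^{-\mu t}$ for each $j$, whence the integrand converges pointwise to $G(t):=\sum_{j=0}^{s-1}\frac{(\mu t)^j}{j!}e^{-\mu t}$.

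The main obstacle is justifying the interchange of limit and integral, which I would handle by dominated convergence with an $N$-uniform bound. The two estimates $N p_N(t)\le C(1+t)$ (from $1-e^{-x}\le x$ and $N(1-\omega)\to0$) and $\mathbb{P}(W>t)\le \omega e^{-\mu t/N}$ (valid for $N$ large, since then $1-\omega<0$ and $e^{-\lambda_e t}\le e^{-\mu t/N}$) combine, together with the boundedness of $\omega^{N}=(1+O(N^{-B-1}))^{N}$, to give
\[
\binom{N}{j}p_N(t)^j\bigl(1-p_N(t)\bigr)^{N-j}\le \frac{\bigl(C(1+t)\bigr)^j}{j!}\,\omega^{N-j}e^{-\mu(N-j)t/N}\le C'\frac{\bigl(C(1+t)\bigr)^j}{j!}\,e^{-\mu t/2}
\]
for all large $N$ and all $0\le j\le s-1$. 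Summing over $j$ bounds the integrand by a fixed polynomial times $e^{-\mu t/2}$, which is integrable on $[0,\infty)$, so dominated convergence applies.

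Finally, $G$ is precisely the survival function of a sum of $s$ independent $\mathrm{Exp}(\mu)$ random variables, so $\int_0^\infty G(t)\,dt=s/\mu$ and the result follows. I would also remark that attacking this limit directly through Theorem~\ref{firstmomentW} and Lemma~\ref{technicallemma}, as in the previous two asymptotic theorems, is awkward here: because $\binom{N}{j}\to\infty$, one cannot pass to the limit term by term and would instead be forced to extract the $j$-th order cancellation hidden in the inner $v$-sum, which the integral route sidesteps entirely.
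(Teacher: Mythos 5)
Your proof is correct, and it takes a genuinely different route from the paper's. The paper disposes of this limit in a few informal lines: the network as a whole broadcasts at rate $\mu$ for every $N$; as $N\to\infty$ the probability that there is always some sensor with energy tends to one, as does the probability that any $s$ consecutive broadcasts come from distinct sensors; hence the client waits in expectation $1/\mu$ per measurement. That is exactly the intuition you articulate at the outset (the broadcast count becomes Poisson of mean $\mu t$, so $W_s$ becomes Erlang$(s,\mu)$), but the paper's version is heuristic: it passes from a statement about the limiting dynamics to convergence of the expectation $\mathbb{E}[W_s]$ without any uniform-integrability control, which is precisely the gap your dominated-convergence step closes. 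Your route --- the pointwise Poisson limit of the binomial tail, using $\nu(0)=O(N^{-B})$ and $N(1-\omega)\to 0$ extracted from Lemmas~\ref{pisensor} and~\ref{lemmaPW}, combined with the $N$-uniform dominating bound (a fixed polynomial times $e^{-\mu t/2}$) --- fully justifies interchanging $\lim_{N\to\infty}$ with $\int_0^\infty\mathbb{P}(W_s>t)\,dt$, and yields as a by-product convergence in distribution of $W_s$ to an Erlang$(s,\mu)$ law, which the paper never states. Your closing remark is also on target: the term-by-term route through Theorem~\ref{firstmomentW} and Lemma~\ref{technicallemma}, which the paper does use for the $\lambda_e\to\infty$ and $B\to\infty$ limits, is not available here because $\binom{N}{j}\to\infty$, so one cannot pass to the limit inside the double sum; the integral representation sidesteps this. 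In short, the paper's argument buys brevity and physical insight; yours buys an actual proof of the same statement.
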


\begin{proof}
Recall that a measurement is broadcasted at an exponential rate $\mu$ from the sensor network, for any $N>0$. Moreover, given that $N \rightarrow \infty$, the probability that there is always at least one sensor with energy which can transmit, tends to 
$1$. Also, the probability that any $s$ consecutive broadcasts are from distinct sensors, tends to 1 as $N \rightarrow \infty$. From the above arguments it follows that, for each of the $s$ measurements, a client waits, in expectation, $1/\mu$.  The result follows.
\end{proof}

%
%
%

\section{Numerical Results}\label{Numerical}

In this section we analyze numerically the expected waiting time for a client to retrieve a reliable estimate of an attribute under various assumptions concerning the size of the wireless sensor network, the maximum battery capacity of the sensors and the rate at which sensors harvest energy from the environment. 

Figures \ref{fig:lambda>mu-a} and \ref{fig:lambda<mu-b} show $\mathbb{E}[W_s]$ under various $N$, the size of the wireless sensor network. As expected, as $N$ increases, $\mathbb{E}[W_s]$ decreases. The reason is that, as $N$ increases, the probability that at least one sensor has battery to broadcast an innovative measurements increases. Thus, it is expected that clients wait less to retrieve $s$ measurements to compute a reliable estimate of an attribute. Moreover, Figures \ref{fig:lambda>mu-a} and \ref{fig:lambda<mu-b} show that, for a fixed $\lambda_e$, if $B$ is increased, then $\mathbb{E}[W_s]$ decreases. This is because as $B$ increases, more energy can be collected in the sensor batteries, which enables broadcasts.  

Figure \ref{fig:lambda>mu-a} considers the case when $\lambda_e \geq \mu/N$, whereas Figure  \ref{fig:lambda<mu-b} considers the case when $\lambda_e < \mu/N$. When $\lambda_e \geq \mu/N$, it is expected that most of the time the batteries of the sensors have energy. If  $\lambda_e < \mu/N$, the batteries are expected to be empty most of the time. This explains the fact that $\mathbb{E}[W_s]$ takes lower values in Figure \ref{fig:lambda>mu-a} than in Figure  \ref{fig:lambda<mu-b}.

\begin{figure}[!htb]
    \centering
    \begin{minipage}{.48\textwidth}
        \centering
        \includegraphics[scale=0.5]{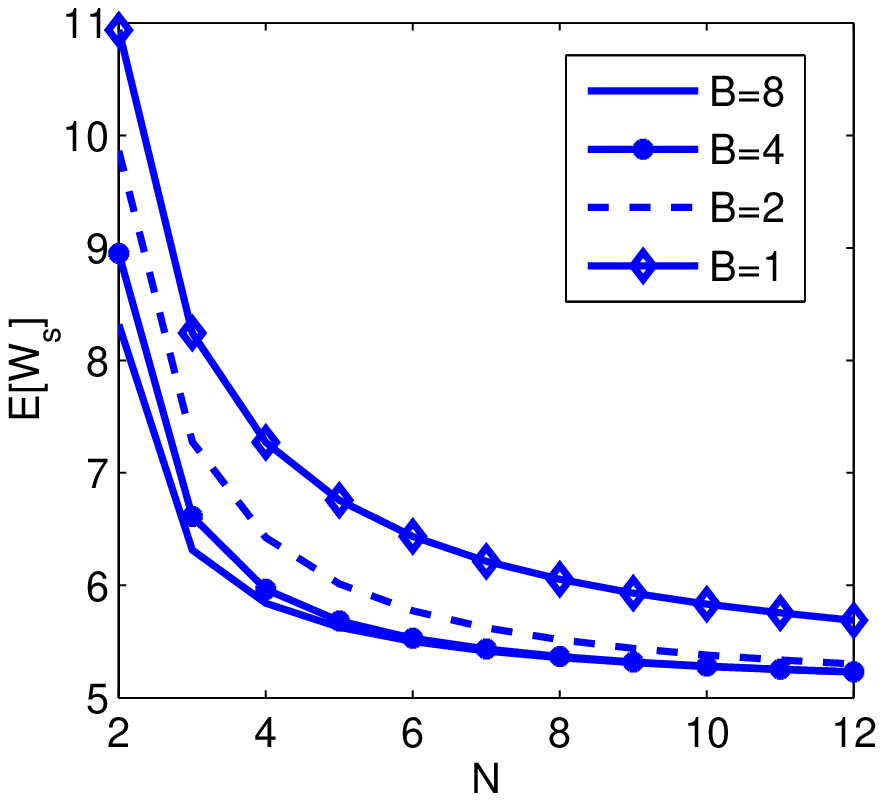}
        \caption{$\lambda_e=0.2$,  $ \mu=0.4, s=2$.}
        \label{fig:lambda>mu-a}
    \end{minipage}%
    \begin{minipage}{0.48\textwidth}
        \centering
        \includegraphics[scale=0.5]{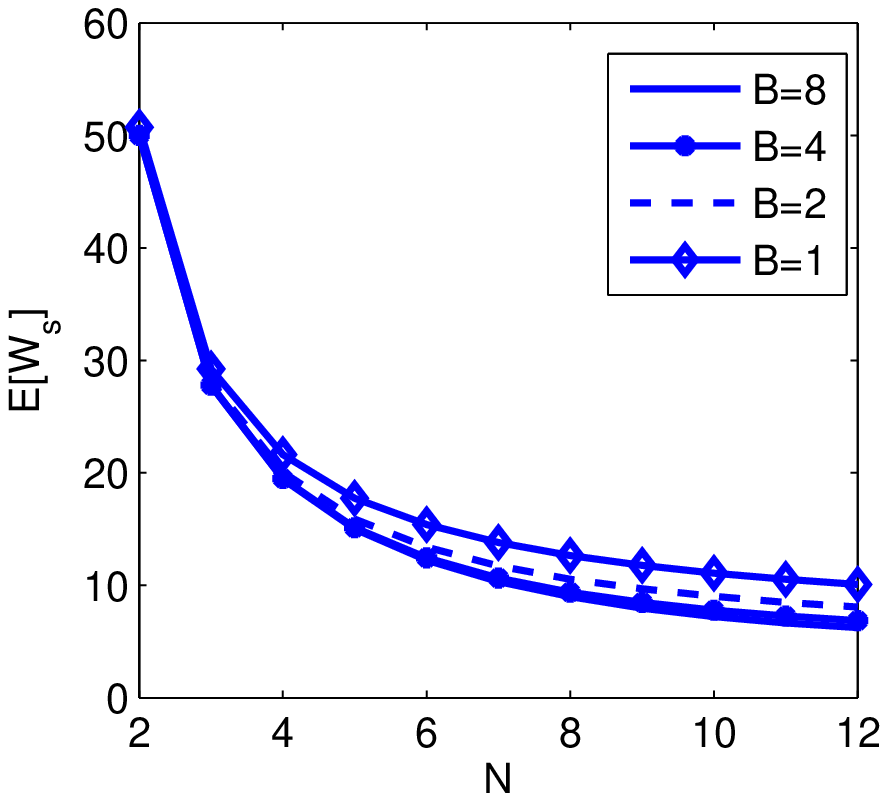}
        \caption{$\lambda_e=0.03$, $\mu=0.4, s=2$.}
        \label{fig:lambda<mu-b}
    \end{minipage}
\caption{$\mathbb{E}[W_s]$ under various $N$, the size of the sensor network.} \label{varyN}
\end{figure}

\begin{figure}[ht!]
    \centering
\includegraphics[scale=0.6]{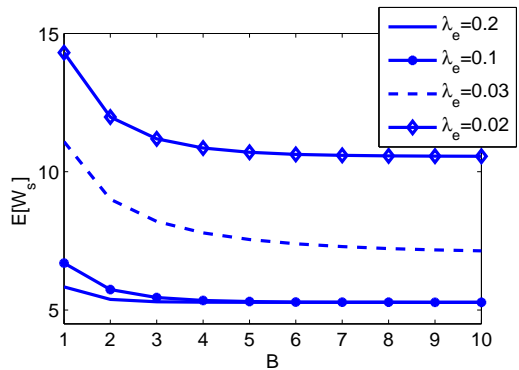}
    \caption{$\mathbb{E}[W_s]$ under various $B$, the maximum battery capacity of a sensor, $N=10$, $\mu=0.4, s=2$. }
    \label{varyB}
\end{figure}

Figure \ref{varyB} shows $\mathbb{E}[W_s]$ under various $B$, the battery capacity of a sensor, and for various $\lambda_e$, the rate at which sensors harvest energy from the environment. As expected, for a fixed $B$, $\mathbb{E}[W_s]$ decreases as $\lambda_e$ increases. This is the case because the battery of the sensors are more frequently replenished and, thus, the sensors have energy to broadcasts their measurements. 
We note that for $\lambda_e \in \{0.1, 0.2\}$, $\lambda_e > \mu/N$, while for $\lambda_e \in \{0.03, 0.02\}$,  $\lambda_e < \mu/N$.
Figure \ref{varyB} also shows that, for a fixed $\lambda_e$, if $B$ increases, then $\mathbb{E}[W_s]$ decreases. This decrease becomes less significant for large values of $B$. 
This can be explained as follows. In the case that $\lambda_e \geq \mu/N$, even though sensors are able to store large amounts of energy, i.e. $B$ is large, the rate at which the sensors broadcast is low and thus, $\mathbb{E}[W_s]$ mostly depends on the broadcasting rate, rather than $B$. In the case that $\lambda_e < \mu/N$, even though $B$ is large, the amount of energy in the batteries is expected to be low most of the times. Thus, in this case, the fact that $B$ is very large does not result in a significantly decrease in $\mathbb{E}[W_s]$.

\section{Conclusions}\label{Conclusions}

In this paper, we considered the problem of retrieving a reliable estimate of an attribute from a wireless sensor network that harvests energy from the environment. We assumed that energy is available for harvesting at each sensor according to a Poisson process. Moreover, the sensors store the harvested energy in a limited-capacity battery. Provided there is sufficient energy stored, the sensors broadcast their measurements, independently, at random. 

We showed that  the time until an arbitrary sensor broadcasts has a phase-type distribution. Based on this, we determined the probability distribution of the time to retrieve from the sensor network a reliable estimate of an attribute. We also provided a closed-form expression for the expected time to retrieve this estimate. In addition, we determined the retrieval time of a reliable estimate when the energy available for harvesting, the storage capacity of the sensor battery or the size of the sensor networks is asymptotically large. 

Lastly, we analyzed numerically the retrieval time of a reliable estimate under various assumptions concerning the size of the wireless sensor network, the maximum capacity of the batteries of sensors, as well as the rate at which sensors harvest energy from the environment. The numerical results show that deploying sensors with very large  batteries does not result in a significant decrease in the retrieval time of a reliable estimate. On the other hand, the numerical results show that the rate at which sensors harvest energy and the rate at which they broadcast significantly influences the retrieval time of a reliable estimate.

\bibliographystyle{spmpsci}
\bibliography{IEEEabrv,mybibfile}

\end{document}